\newcommand\myshade{70} 
\algrenewcommand\alglinenumber[1]{\scriptsize #1:}
\algrenewcommand\algorithmicindent{1em}%
\newcommand{\bea}{\begin{eqnarray}}
\newcommand{\eea}{\end{eqnarray}}
\newcommand{\bean}{\begin{eqnarray*}}
\newcommand{\eean}{\end{eqnarray*}}
\newcommand{\sbinom}[2]{\left( \begin{array}{c} #1 \\ #2 \end{array} \right) }
\newcommand{\field}[1]{\mathbb{#1}}
\newcommand{\F}{\field{F}}
\newcommand{\cC}{{\cal C}}
\newcommand{\cO}{{\cal O}}
\newcommand{\cS}{{\cal S}}
\newcommand{\sG}{\script{G}}
\newcommand{\sP}{\script{P}}
\newcommand{\bfc}{{\boldsymbol c}}
\newcommand{\bfm}{{\boldsymbol m}}
\newcommand{\bfs}{{\boldsymbol s}}
\newcommand{\bfu}{{\boldsymbol u}}
\newcommand{\bfv}{{\boldsymbol v}}
\newcommand{\bfx}{{\boldsymbol x}}
\newcommand{\bfy}{{\boldsymbol y}}
\newcommand{\bfz}{{\boldsymbol z}}
\newcommand{\bfX}{{\mathbf X}}
\DeclareMathAlphabet{\mathbfsl}{OT1}{cmr}{bx}{it}
\newcommand{\uuu}{\kern-1pt\mathbfsl{u}\kern-0.5pt}
\newcommand{\vvv}{\kern-1pt\mathbfsl{v}\kern-0.5pt}
\newcommand{\myboxplus}{\kern1pt\mbox{\small$\boxplus$}}
\makeatletter \DeclareRobustCommand{\sbinom}{\genfrac[]\z@{}}
\newcommand{\G}[2]{\sbinom{{#1}\kern-1pt}{{#2}\kern-1pt}}
\newcommand{\Gq}[2]{\sbinom{{#1}\kern-0.25pt}{{#2}\kern-0.25pt}}
\newcommand{\Fq}{\smash{{\mathbb F}_{\!q}}}
\newcommand{\Ps}{\smash{{\sP\kern-2.0pt}_q\kern-0.5pt(n)}}
\newcommand{\sPs}{\smash{{\sP\kern-1.5pt}_q(n)}}
\newcommand{\Ptwo}{\smash{{\sP\kern-2.0pt}_2\kern-0.5pt(n)}}
\newcommand{\Ptwom}{\smash{{\sP\kern-2.0pt}_2\kern-0.5pt(m)}}
\newcommand{\Ptwonm}{\smash{{\sP\kern-2.0pt}_2\kern-0.5pt(n+m)}}
\newcommand{\Ptwoa}{\smash{{\sP\kern-2.0pt}_2\kern-0.5pt(1)}}
\newcommand{\Ptwob}{\smash{{\sP\kern-2.0pt}_2\kern-0.5pt(2)}}
\newcommand{\Ptwoc}{\smash{{\sP\kern-2.0pt}_2\kern-0.5pt(3)}}
\newcommand{\Ptwod}{\smash{{\sP\kern-2.0pt}_2\kern-0.5pt(4)}}
\newcommand{\Ptwoe}{\smash{{\sP\kern-2.0pt}_2\kern-0.5pt(5)}}
\newcommand{\Ptwof}{\smash{{\sP\kern-2.0pt}_2\kern-0.5pt(6)}}
\newcommand{\Ptwokm}{\smash{{\sP\kern-2.0pt}_2\kern-0.5pt(2k-1)}}
\newcommand{\Pone}{\smash{{\sP\kern-2.5pt}_2\kern-0.5pt(n{-}1)}}
\newcommand{\Gr}{\smash{{\sG\kern-1.5pt}_q\kern-0.5pt(n,k)}}
\newcommand{\Gi}{\smash{{\sG\kern-1.5pt}_q\kern-0.5pt(n,i)}}
\newcommand{\Gj}{\smash{{\sG\kern-1.5pt}_q\kern-0.5pt(n,j)}}
\newcommand{\Grmk}{\smash{{\sG\kern-1.5pt}_q\kern-0.5pt(n,n-k)}}
\newcommand{\Grdk}{\smash{{\sG\kern-1.5pt}_q\kern-0.5pt(2k,k)}}
\newcommand{\Grekappa}{\smash{{\sG\kern-1.5pt}_q\kern-0.5pt(n,e+1-\kappa)}}
\newcommand{\Grtwoekappa}{\smash{{\sG\kern-1.5pt}_q\kern-0.5pt(n,2e+1-\kappa)}}
\newcommand{\Gremkappa}{\smash{{\sG\kern-1.5pt}_q\kern-0.5pt(n,e-\kappa)}}
\newcommand{\Gn}{\smash{{\sG\kern-1.5pt}_2\kern-0.5pt(n,n{-}1)}}
\newcommand{\Gnq}{\smash{{\sG\kern-1.5pt}_q\kern-0.5pt(n,n{-}1)}}
\newcommand{\Gone}{\smash{{\sG\kern-1.5pt}_2\kern-0.5pt(n,1)}}
\newcommand{\Gqone}{\smash{{\sG\kern-1.5pt}_q\kern-0.5pt(n,1)}}
\newcommand{\GTwo}{\smash{{\sG\kern-1.5pt}_2\kern-0.5pt(n,k)}}
\newcommand{\GTwonk}[2]{{\smash{{\sG\kern-1.5pt}_2\kern-0.5pt({#1},{#2})}}}
\newcommand{\Gnk}{\smash{{\sG\kern-1.5pt}_2\kern-0.5pt(n,n{-}k)}}
\newcommand{\Greone}{\smash{{\sG\kern-1.5pt}_q\kern-0.5pt(n,e{+}1)}}
\newcommand{\Gretwo}{\smash{{\sG\kern-1.5pt}_q\kern-0.5pt(n,e{+}2)}}
\newcommand{\be}[1]{\begin{equation}\label{#1}}
\newcommand{\ee}{\end{equation}}
\newcommand{\Cref}[1]{Co\-rol\-la\-ry\,\ref{#1}}
\newcommand{\LCS}{\mathsf{LCS}}
\newcommand{\FLL}{\mathsf{FLL}}
\newtheorem{theorem}{Theorem}
\newtheorem{lemma}{Lemma}
\newtheorem{remark}{Remark}
\newtheorem{corollary}{Corollary}
\newtheorem{definition}{Definition}
\newtheorem{proposition}{Proposition}
\newtheorem{example}{Example}
\newtheorem{construction}{Construction}
\begin{document}

\author{%
  \IEEEauthorblockN{\textbf{Shubhransh~Singhvi}}

  \IEEEauthorblockA{%
                      International Institute of Information Technology, Hyderabad, India\\
                      \texttt{shubhranshsinghvi2001@gmail.com}}
}

\title{Optimally Decoding Two-Dimensional Reed-Solomon Codes Against Deletion Errors}
\date{}
\maketitle
\thispagestyle{empty}	
\pagestyle{empty}

\begin{abstract}
Constructing Reed-Solomon (RS) codes that can correct insertion and deletion (ins-del) errors has been the focus of several recent studies. However, efficient decoding algorithms for such codes have received less attention and remain a significant open problem. In this work, we take a first step toward addressing this problem by designing a decoding algorithm for the case of $2$-dimensional RS codes that can correct deletions up to the half-Singleton bound and is optimal in terms of field operations.
\end{abstract}

\section{Introduction}
In the last decade, channels that introduce insertion and deletion errors have attracted significant attention due to their relevance to DNA storage systems~\cite{anavy2019data, grass2015robust, organick2018random, pan2022rewritable, tabatabaei2020dna, yazdi2017portable}, where deletions and insertions are among the most dominant error types~\cite{heckel2019characterization, sabary2021solqc}. An insertion error occurs when a new symbol is added between two symbols of the transmitted word. Conversely, a deletion error occurs when a symbol is removed from the transmitted word. These errors affect the length of the received word. For example, over the binary alphabet, if $100110$ is transmitted, the received word might be $11011000$, which can result from three insertions (a $1$ at the beginning and two $0$s at the end) and one deletion (one of the $0$s at the beginning of the transmitted word). The study of communication channels with insertion and deletion errors is also relevant to various other applications, such as the synchronization of files and data streams~\cite{dolecek2007using}, and scenarios involving over-sampling or under-sampling at the receiver side~\cite{sala2016synchronizing}. VT codes, introduced by Varshamov and Tenengolts~\cite{varshamov1965code}, represent the first family of codes capable of correcting a single deletion or a single insertion~\cite{levenshtein1966binary}. Subsequent research extended these schemes to handle multiple deletion errors, as well as substitution and edit errors; see, e.g.,~\cite{brakensiek2017efficient, gabrys2018codes, gabrys2022beyond, guruswami2021explicit, schoeny2017codes, cheng2018deterministic, sima2020optimal, sima2021coding, smagloy2023single, sun2023sequence}. In some applications, such as DNA storage systems, the problem of list decoding has also been studied. For example, works such as~\cite{abu2021list, guruswami2020optimally, hayashi2020list, hanna2019list, liu2019list, wachter2017list} examine scenarios where the decoder receives a channel output and returns a (short) list of possible codewords, including the transmitted codeword. 

Even though significant progress has been made in recent years toward understanding the ins-del error model—both in identifying its limitations and constructing efficient codes—our understanding of this model still lags far behind that of codes designed to correct erasures and substitution errors. For further details, we refer the reader to the following excellent surveys: \cite{mitzenmacher2009survey, mercier2010survey, cheraghchi2020overview, haeupler2021synchronization}.

Linear codes provide a compact representation, efficient encoding algorithms, and, in many cases, efficient decoding algorithms. The rate $R$ of a code $\cC \subseteq \Fq^n$ is defined as $\frac{\log_q\left(\vert\cC\vert\right)}{n}$, which, for a linear code, is $\frac{k}{n}$, where $k$ is the dimension of the code. While linear codes are predominantly used for correcting Hamming errors, most constructions for handling ins-del errors are non-linear. This disparity can be attributed to findings in works such as \cite{abdel2007linear, cheng2023efficient}, which demonstrate that the maximal rate of a linear code capable of correcting ins-del errors is significantly lower than that of a non-linear code correcting the same number of such errors.

Consequently, exploring the performance of linear codes against the ins-del error model remains an important open problem and has been the focus of recent studies~\cite{cheng2023efficient, chen2022coordinate, con2022explicit, con2023reed, ji2023strict, cheng2023linear, liu2024optimal, con2024optimal, con2024random}. The works \cite{cheng2023efficient,chen2022coordinate, ji2023strict} have shown that the best one can hope for while designing linear codes capable of decoding ins-del errors is to achieve the \emph{half-Singleton bound}, given next. 



\begin{theorem}[Half-Singleton bound; see {\cite[Corollary 5.2]{cheng2023efficient}}]
An $[n, k]_q$ linear code $\mathcal{C}$ can correct at most $n - 2k + 1$ ins-del errors.
\end{theorem}

Reed-Solomon (RS) codes \cite{reed1960polynomial} are among the most widely used families of codes in both theory and practice, with numerous applications, including QR codes, secret sharing schemes, space transmission, data storage, and more. Their ubiquity is largely due to their simplicity and the availability of efficient encoding and decoding algorithms. Consequently, understanding how RS codes perform against ins-del errors is a significant problem, one that has garnered considerable attention in recent years~\cite{safavi2002traitor, wang2004deletion, tonien2007construction, do2021explicit, liu20212, chen2022improved, con2023reed, con2024optimal, liu2024optimal, con2024random}. We next formally define RS Codes.
\begin{definition}[Reed-Solomon codes]
Let $\alpha_1, \alpha_2, \ldots, \alpha_n \in\F_q$ be distinct points in a finite field $\mathbb{F}_q$ of order  $q\geq n$. For $k\leq n$ the $[n,k]_q$ RS-code
defined by the evaluation vector $\bm{\alpha} = ( \alpha_1, \ldots, \alpha_n )$ is the set of codewords 
\[
\left \lbrace c_f = \left( f(\alpha_1), \ldots, f(\alpha_n) \right) \mid f\in \mathbb{F}_q[x],\deg f < k \right \rbrace \;.
\]
\end{definition}
Namely, a codeword of an $[n,k]_q$ RS-code is the evaluation vector of some polynomial of degree less than $k$ at $n$ predetermined distinct points. 
It is well known that the rate of $[n,k]_q$ RS-code is $k/n$ and the minimal distance, w.r.t. the Hamming metric, is $n-k+1$, the maximum possible by the \textit{Singleton bound} \cite{singleton1964maximum}. As such, RS codes are maximum distance separable(MDS) w.r.t. the Hamming metric.  

Notably, Con et al. in \cite{con2022explicit} have demonstrated that certain RS codes can achieve the half-Singleton bound.  However, the problem of efficiently handling the ins-del errors remains largely open. While RS codes, like other linear codes, benefit from efficient encoding algorithms, efficient decoding algorithms are not a direct consequence of linearity. Instead, they require a careful analysis of the code's algebraic structure.

We measure the time complexity in terms of field operations over $\mathbb{F}_q$, and we omit $\text{poly}(\log q)$ factors in our complexity notations.

\paragraph*{\textbf{Scope of this work}} Informally, the goal of this work is to design an efficient decoding algorithm for RS codes against ins-del errors, achieving a decoding radius as close as possible to the half-Singleton bound(ideally achieving the bound itself). For $k=2$, by carefully analyzing the structure of the $2$-dimensional RS code constructed by Roni et al.~\cite{con2024optimal}, we design a \textit{linear-time} decoding algorithm capable of correcting up to $n-3$ deletions, the maximum number of deletion errors that can be corrected. Specifically, 

\setcounter{theorem}{0}
\refstepcounter{theorem}
\begin{theorem}[Informal]\label{thm:main}
There exists a deterministic construction of an $[n,2]_q$-RS code over a field of size $q\sim \Theta(n^3)$, for which there is a linear-time decoding algorithm that can recover the transmitted codeword from any $n-3$ received symbols.
\end{theorem}

\paragraph*{\textbf{Outline}} In Section~\ref{sec:rel_work}, we review related work. Section~\ref{sec:notation_prelim} introduces the notation and presents preliminary results. In Section~\ref{sec:dec_algo}, we present the main contribution of this work—a linear-time decoding algorithm for the $2$-dimensional Reed-Solomon code constructed by Roni et al.~\cite{con2024optimal}. Section~\ref{sec:conclusion} then concludes the paper with a discussion on future work and open problems.

\section{Related Works} 
\label{sec:rel_work}
\subsection{Decoding RS Codes against Substitution Errors}
Designing efficient algorithms for decoding RS codes against substitution(Hamming) errors has been of immense importance and has received significant attention.  

An $[n,k]_q$ Reed-Solomon (RS) code is uniquely decodable up to the optimal radius $\rho = \frac{1 - R}{2}$, where $R \triangleq \frac{k}{n}$ is the rate of the code. This means that for every received word $\bfy \in \Fq^n$, there exists at most one codeword within Hamming distance $\rho n$. Over the years, several efficient decoding algorithms have been developed that achieve this unique decoding radius \cite{peterson1960encoding, gorenstein1961class, berlekamp1968algebraic, massey1969shift, sugiyama1975method, welch1983error, gao2003new, justesen2006complexity, bellini2011structure, fedorenko2002finding, lin2007fast, gao2010additive, wu2012reduced, lin2016fft, fedorenko2019efficient, tang2022new}.


A code $\cC \subseteq \Fq^n$ is $(\rho, L)$-list decodable if, for every received word $\bfy \in \Fq^n$, there are at most $L$ codewords of $\cC$ within the Hamming distance $\rho n$ of $\bfy$. 
Due to Johnson-bound \cite{johnson1962new}, Reed–Solomon codes are $(\rho, L)$-list-decodable for error parameter $\rho = 1 - \sqrt{R} - \epsilon$ and list size $L = \cO\left(\frac{1}{\epsilon}\right)$. Sudan\cite{sudan1997decoding} developed an efficient algorithm to list-decode RS codes up to the radius $\rho = 1 - \sqrt{2R}$, which was later improved by Sudan and Guruswami \cite{guruswami1999improved} to efficiently
list-decode RS codes up to the Johnson radius $1 - \sqrt{R}$. Koetter and Vardy \cite{koetter2003algebraic} further modified the Guruswami-Sudan algorithm to incorporate soft-decoding.

For $\bfv\in \mathbb{F}_q^n$, let $B_t(\bfv)$ be the radius-$t$ ball centered at $\bfv$. A code $\cC$ is $(\rho n, N)$-sequence reconstructable \cite{levenshtein2001efficientA, levenshtein2001efficientB} if, for every codeword $\bfc \in \cC$, it can be uniquely reconstructed from the $N$ distinct channel outputs of $\bfc$, denoted by $Y \triangleq \left\{\bfy_{1}, \ldots, \bfy_{N}\right\} \subseteq B_{\rho n}(\bfc)$. To uniquely reconstruct $\bfc$, it must hold that, for all the other codewords $\bfc' \in \cC /\{\bfc\}$, $\bfc' \not\in \bigcap_{i=1}^{N}B_{\rho n}(\bfy_i)$. Recently, Singhvi et al. \cite{singhvi2024optimal} explored RS codes in the context of sequence reconstruction and proposed an efficient algorithm capable of decoding beyond the Johnson-bound.

\subsection{RS codes against ins-del errors}
The performance of RS codes against ins-del errors was first considered in \cite{safavi2002traitor} in the context of traitor tracing. In \cite{wang2004deletion}, the authors constructed a $[5, 2]$-RS code capable of correcting a
single deletion. Subsequently, in \cite{tonien2007construction}, an $[n, k]$-generalized RS code capable of correcting $\log_{k+1}(n) - 1$ deletions was constructed. In \cite{do2021explicit, liu20212}, the authors provided constructions of $2$-dimensional RS codes capable of correcting $n - 3$ ins-del errors. A breakthrough was achieved in \cite{con2023reed}, where the authors constructed the first linear codes that achieve the half-Singleton bound. Specifically, they demonstrated that certain $[n, k]$ RS codes achieve this bound with a field size of $n^{O(k)}$. In \cite{con2024optimal}, for RS codes with dimension $k=2$, the authors showed that the minimal field size required is $q=\Omega(n^3)$ and gave explicit code constructions matching the minimal field size. Furthermore, in a more recent study~\cite{con2024random}, the authors demonstrated that, with high probability, random Reed-Solomon codes approach the half-Singleton bound over a linear-sized alphabet.

\section{Notations and Preliminaries}
\label{sec:notation_prelim}
Let $[k] \triangleq \{1,2,\ldots,k\}$, where $k \in \mathbb{Z}_{+}$. Let $\Sigma_q$ be an alphabet of size $q$. For a set $\cS$ and $k \in \mathbb{Z}_{+}$, we denote the set of all possible subsets of $\cS$ of size $k$ by $\binom{\cS}{k}$. For a prime power $q$, we denote with $\Fq$ the field of size $q$. Let $\Fq^* \triangleq \Fq/\{0\}$. Let $a,b \in \Fq^{*}$, then by the abuse of notation, we use $\frac{a}{b}$ to refer to $a\cdot b^{-1}$. Let $\cC\subseteq \Fq^n$ be a linear code of dimension $k$, then it can be described as the image of a linear map, which, abusing notation, we also denote with $\cC$, i.e., $\cC : \Fq^k \rightarrow \Fq^n$. When $\cC\subseteq \F_q^n$ has dimension $k$ we say that it is an $[n,k]_q$ code (or an $[n,k]$ code defined over $\Fq$). The minimal distance of a code $\cC$ with respect to a metric $d(\cdot,\cdot)$ is defined as $\min_{\bfv, \bfu \in \cC,\bfu \neq \bfv}{d(\bfv,\bfu)}$.

In this work, we focus on codes against insertions and deletions. 

\begin{definition}
Let $\bfs$ be a string over the alphabet $\Sigma_q$. The operation in which we remove a symbol from $\bfs$ is called a \emph{deletion} and the operation in which we place a new symbol from $\Sigma_q$ between two consecutive symbols in $\bfs$, in the beginning, or at the end of $\bfs$, is called an \emph{insertion}.
\end{definition}

\begin{example} Let $\bfs =  \texttt{CAT} \in \Sigma_5^3$ (treating $\{C,A,T\}\subseteq\Sigma_5$). If we delete the second symbol, we obtain $\texttt{CT}$. If we insert a symbol $\texttt{G}$ (also in $\Sigma_5$) at the beginning, we obtain $\texttt{GCAT}$.
\end{example}

\begin{definition}
    A string $\bfy$ over the alphabet $\Sigma_q$ is a  \emph{subsequence} of $\bfx\in\Sigma_q^n$ if $\bfy$ can be obtained by deleting symbols from $\bfx$. Similarly, a string $\bfz$ over the alphabet $\Sigma_q$ is a  \emph{supersequence} of $\bfx\in\Sigma_q^n$ if $\bfx$ is a subsequence of $\bfz$. 
\end{definition}	

\begin{example}Let $\bfx = \texttt{BANANA}$. Then
$\texttt{BANA}$ is a subsequences of $\bfx$ and can be obtained by deleting the 2nd and 5th symbols of \texttt{BANANA}. Conversely, $\texttt{BANANAX}$ (where $\texttt{X}\in\Sigma_q$ is any symbol) is a supersequence of \texttt{BANANA} since \texttt{BANANA} appears as a subsequence in the longer string \texttt{BANANAX}.
\end{example}


Let $\bfx, \bfx'$ be $n$-length strings over the alphabet $\Sigma_q$. The relevant metric for codes against insertions and deletions is the fixed-length Levenshtein(FLL) distance that we define next.

\begin{definition}
The \emph{FLL distance} between $\bfx$ and $\bfx'$, denoted by $\FLL(\bfx,\bfx')$, is the smallest integer $t$ such that $\bfx$ can be transformed to $\bfx'$ by $t$ insertions and $t$ deletions.  
\end{definition}

\begin{example} Let $\bfx=\texttt{TORN}$ and $\bfx'=\texttt{TRIM}$. We can go from \texttt{TORN} to \texttt{TRIM} by deleting the letters $\texttt{O,N}$ (two deletion) and inserting the letters $\texttt{I,M}$. Thus $\FLL(\texttt{TORN},\texttt{TRIM}) = 2$ since two deletions and two insertions suffice. 
\end{example}

One of the most fundamental parameters in any metric is the size of a ball with a given radius $t$ centered at a word $\bfx$. In the case of the FLL metric, the size of the ball depends on the word $\bfx$ \cite{sala2013counting, bar2022size, wang2024size, he2023size}, contrary to the Hamming metric. 

\begin{definition}
We denote the the longest common subsequence between any two sequences $\bfx,\bfx'\in\Sigma_q^*$ by $\LCS(\bfx,\bfx')$. 
\end{definition}

\begin{example} Let $\bfx = \texttt{ABCBDAB}$ and $\bfx'=\texttt{BDCABA}$. Then one longest common subsequence is $\texttt{BCBA}$ (of length 4), so $|\LCS(\bfx,\bfx')|=4$. 
\end{example}

It is well known that the ins-del correction capability of a code is determined by the maximal length of longest common subsequence of all pairs of distinct codewords. Specifically, 
\begin{lemma} \label{lem:ins-del-comb}
A code $\cC\subseteq \Fq^n$ can correct any combination of $t$ insertions and deletions if and only if, for any two distinct codewords $\bfc,\bfc'\in \cC$, it holds that $\left|\LCS(c,c')\right| < n - t$. Furthermore, it holds that $\FLL(\bfc,\bfc') = n -  \left|\LCS(\bfc,\bfc')\right| $.
\end{lemma}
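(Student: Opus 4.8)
The plan is to prove the identity $\FLL(\bfc,\bfc')=n-|\LCS(\bfc,\bfc')|$ first and then read both directions of the equivalence off from it by a short ball-packing argument. Everything rests on one bookkeeping fact: if a sequence of insertions and deletions transforms a string $\bfu$ into a string $\bfv$, then the symbols of $\bfu$ that are never deleted --- the \emph{survivors} --- occur inside $\bfv$ in their original left-to-right order, and therefore, read off in $\bfv$, they form a common subsequence of $\bfu$ and $\bfv$; this property is inherited by any subset of the survivor positions.

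\emph{The identity.} For the bound $\FLL(\bfc,\bfc')\le n-|\LCS(\bfc,\bfc')|$, fix an embedding of a longest common subsequence $\bfz$ of $\bfc$ and $\bfc'$: deleting from $\bfc$ the $n-|\bfz|$ symbols outside it produces $\bfz$, and inserting the $n-|\bfz|$ missing symbols of $\bfc'$ then produces $\bfc'$ --- a transformation by $n-|\bfz|$ deletions and $n-|\bfz|$ insertions. For the reverse bound, take any transformation of $\bfc$ into $\bfc'$ by $s$ insertions and $s$ deletions; at most $s$ of the $n$ symbols of $\bfc$ are deleted, so at least $n-s$ of them survive and, by the bookkeeping fact, form a common subsequence, giving $|\LCS(\bfc,\bfc')|\ge n-s$. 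Minimising over $s$ yields $\FLL(\bfc,\bfc')\ge n-|\LCS(\bfc,\bfc')|$, and the two bounds match.

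\emph{The equivalence.} Call a word $\bfy$ \emph{consistent} with a codeword $\bfc$ if it is obtainable from $\bfc$ by $a$ insertions and $b$ deletions for some $a+b\le t$; the code corrects $t$ insertions and deletions precisely when no word is consistent with two distinct codewords. If some distinct pair has $|\LCS(\bfc,\bfc')|\ge n-t$, then a longest common subsequence $\bfz$ is obtained from each of $\bfc$ and $\bfc'$ by $n-|\bfz|\le t$ deletions and no insertions, hence is consistent with both, and the code fails. Conversely, suppose $\bfy$ is consistent with distinct $\bfc$ (via $a$ insertions, $b$ deletions) and $\bfc'$ (via $a'$, $b'$). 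Letting $S,S'$ be the positions inside $\bfy$ occupied by the survivors of $\bfc$ and of $\bfc'$, we have $|S|\ge n-b$, $|S'|\ge n-b'$ and $|\bfy|=n+a-b=n+a'-b'$, and inclusion--exclusion gives
\[
|S\cap S'|\ \ge\ |S|+|S'|-|\bfy|\ \ge\ (n-b)+(n-b')-(n+a-b)\ =\ n-a-b'\ \ge\ n-t,
\]
where the last step uses that the two expressions for $|\bfy|$ force $a+b'=a'+b$, whose common value is half of $(a+b)+(a'+b')\le 2t$. The symbols of $\bfy$ at the positions of $S\cap S'$ form a common subsequence of $\bfc$ and $\bfc'$ of length $\ge n-t$, so $|\LCS(\bfc,\bfc')|\ge n-t$; contrapositively, the LCS condition implies the code corrects $t$ insertions and deletions.

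\emph{Main obstacle.} The hardest part to get right will be the survivor bookkeeping --- precisely, that the never-deleted symbols of $\bfc$, read off in the channel output, really do form a common subsequence of $\bfc$ and the output, and that this passes to any subset of their positions --- since both halves of the proof invoke it; once that is nailed down, the inclusion--exclusion count above is the only genuine computation and everything else is routine.
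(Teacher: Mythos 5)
The paper invokes this as a classical result of Levenshtein and gives no proof of its own, so there is nothing in the text to compare against. Your argument is correct and is essentially the canonical one: the identity $\FLL(\bfc,\bfc')=n-|\LCS(\bfc,\bfc')|$ via the survivor bookkeeping (correctly allowing that a deletion may remove a previously-inserted symbol, so at most $s$ original symbols of $\bfc$ are removed), and the equivalence via the inclusion--exclusion count on the survivor position sets $S,S'$ inside a common channel output $\bfy$, where the key length identity $|\bfy|=n+a-b=n+a'-b'$ forces $a+b'=a'+b\le t$.
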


We use $d_{F}(\cC)$ to denote the minimal distance of a code $\cC\subseteq \Fq^n$ w.r.t the FLL distance. The following corollary is a direct consequence of Lemma~\ref{lem:ins-del-comb}. 

\begin{corollary}
A code $\cC \subseteq \mathbb{F}_q^n$ can correct any pattern of at most $t$ deletions if and only if $d_F(\cC) = t+1$.
\end{corollary}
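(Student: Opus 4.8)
The plan is to translate the corollary entirely into statements about longest common subsequences, using Lemma~\ref{lem:ins-del-comb}. The one ingredient I will take from that lemma is the identity $\FLL(\bfc,\bfc') = n - |\LCS(\bfc,\bfc')|$; minimizing over all pairs of distinct codewords gives at once
\[
d_F(\cC) \;=\; \min_{\bfc \neq \bfc'} \FLL(\bfc,\bfc') \;=\; n - \max_{\bfc \neq \bfc'} |\LCS(\bfc,\bfc')|,
\]
so for any integer $s$ the condition $d_F(\cC) \ge s+1$ is the same as ``$|\LCS(\bfc,\bfc')| \le n-s-1$ for every pair of distinct codewords $\bfc,\bfc'$''.

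The first step is to record the (standard) equivalence, for every $s \ge 0$: the code $\cC$ corrects $s$ deletions if and only if $|\LCS(\bfc,\bfc')| < n - s$ for all distinct $\bfc,\bfc' \in \cC$. This is the deletion-only specialization of Lemma~\ref{lem:ins-del-comb}, and it also has a direct proof. For the ``only if'' direction, argue by contrapositive: if some distinct pair $\bfc,\bfc'$ shares a common subsequence $\bfw$ with $|\bfw| \ge n - s$, then $\bfw$ is obtainable from each of $\bfc$ and $\bfc'$ by at most $s$ deletions, so a decoder that receives $\bfw$ cannot determine which codeword was transmitted, and $\cC$ fails to correct $s$ deletions. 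For the ``if'' direction, again by contrapositive: any received word $\bfw$ witnessing a decoding failure is a common subsequence of two distinct confusable codewords and has length at least $n - s$, which forces $|\LCS(\bfc,\bfc')| \ge n - s$ for that pair. Combining this equivalence with the displayed identity yields the reformulation
\[
\cC \text{ corrects } s \text{ deletions} \iff d_F(\cC) \ge s+1 .
\]

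The final step is to apply this reformulation twice, at $s = t$ and at $s = t+1$: ``$\cC$ corrects $t$ deletions'' is equivalent to $d_F(\cC) \ge t+1$, and ``$\cC$ does not correct $t+1$ deletions'' is equivalent to $d_F(\cC) < t+2$, i.e.\ (since $d_F(\cC)$ is an integer) $d_F(\cC) \le t+1$. The two conditions hold simultaneously precisely when $d_F(\cC) = t+1$, which is the corollary. I do not anticipate a genuine obstacle: all of the combinatorial substance already sits in Lemma~\ref{lem:ins-del-comb}, and the only place needing a little care is the first step --- verifying that it suffices to consider the pure-deletion error pattern, and that the ambiguous received word in a decoding failure is literally a common subsequence of the two confusable codewords, so that the failure converts into a lower bound on their LCS.
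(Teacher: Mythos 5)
Your proposal is correct and follows the route the paper intends: the paper states the corollary as a direct consequence of Lemma~\ref{lem:ins-del-comb} and gives no further proof, and your argument just makes that derivation explicit --- the identity $d_F(\cC) = n - \max_{\bfc\neq\bfc'}|\LCS(\bfc,\bfc')|$ converts ``corrects $s$ deletions'' into ``$d_F(\cC)\ge s+1$'', and applying this at $s=t$ and $s=t+1$ pins down $d_F(\cC)=t+1$. The one point you rightly flag --- that the lemma is phrased for mixed insertion/deletion patterns while the corollary speaks only of deletions --- is handled cleanly by your self-contained proof of the deletion-only LCS criterion, which is the standard equivalence and sidesteps having to invoke the (also standard, but unstated in the paper) fact that for fixed-length codes, $t$-deletion correction, $t$-insertion correction, and correction of any $t$ ins-del errors coincide.
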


\begin{remark}
This work focuses exclusively on deletion errors. It is worth noting, however, that a decoding algorithm capable of correcting any $t$ deletions does not, in general, extend to correcting arbitrary combinations of $t$ insertions and deletions. Hence, the case of correcting arbitrary combinations of $t$ insertions and deletions is still an open problem.  
\end{remark}

Throughout this paper, we shall move freely between representations of vectors as strings and vice versa. Namely, we shall view each vector $\bfv=(v_1, \ldots, v_n)\in \Fq^n$ also as a string by concatenating all the symbols of the vector into one string, i.e., $(v_1, \ldots, v_n) \leftrightarrow v_1 \circ v_2 \circ \ldots \circ v_n$, where $\circ$ denotes concatenation. Thus, if we say that $\bfs$ is a subsequence of some vector $\bfv$, we mean that we view $\bfv$ as a string and $\bfs$ is a subsequence of that string. 

\medskip
\noindent\textbf{Example.} Let $\bfv=(2,0,3,2)\in \F_5^4$. We may view $\bfv$ as the string $\texttt{2032}$. A subsequence of this string is $\texttt{22}$ (obtained by deleting positions 2 and 3), which corresponds to the vector $(2,2)\in \F_5^2$. Conversely, if we start with a string $\texttt{2032}$ and delete symbol $\texttt{0}$, we end up with $\texttt{232}$, corresponding to the vector $(2,3,2)\in \F_5^3$.

\subsection{An Algebraic Condition}
In this section, we recall the algebraic condition presented in \cite{con2023reed}, which was used to characterize the ins-del error correcting capability of RS codes. 

We first make the following definitions:
We say that a vector of indices $I\in [n]^s$ is an \emph{increasing} vector if its coordinates are monotonically increasing, i.e., for any  $1\leq i<j\leq s$, $I_i<I_j$, where $I_i$ is the $i$th coordinate of $I$.  
For any two distinct increasing vectors $I,J\in [n]^{2k-1}$, we define a (variant of) Vandermonde matrix of order $(2k-1)\times (2k-1)$ denoted by $V_{I,J}(\bfX)$ in the formal variables $\bfX =(X_1,\ldots,X_n)$ as follows:

\begin{equation} \label{eq:mat-lcs-eq}
\begin{pmatrix} 
1 & X_{I_1} & \ldots & X_{I_1}^{k-1}  & X_{J_1} &\ldots & X_{J_1}^{k-1} \\ 
1 & X_{I_2} & \ldots & X_{I_2}^{k-1}  & X_{J_2} &\ldots & X_{J_2}^{k-1} \\
\vdots &\vdots & \ldots &\vdots &\vdots &\ldots &\vdots \\
1 & X_{I_{2k-1}} & \ldots & X_{I_{2k-1}}^{k-1}  & X_{J_{2k-1}} &\ldots & X_{J_{2k-1}}^{k-1}\\
\end{pmatrix} .
\end{equation}
\vspace{0.2em}

\begin{proposition}
\cite[Proposition 2.1]{con2023reed} \label{prop:cond-for-RS} Let $\cC$ be an $[n,k]_q$ RS-code defined by an evaluation vector   $\bm{\alpha}=(\alpha_1,\ldots,\alpha_n)$.  If for every two increasing vectors $I,J\in [n]^{2k-1}$ that agree on at most $k-1$ coordinates, it holds that 
\[
\det(V_{I,J}(\bm{\alpha})) \neq 0\;,
\]
then $d_F(\cC) = n-2k+2$, i.e., $\cC$ achieves the half-Singleton bound.

\end{proposition}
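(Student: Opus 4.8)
The plan is to prove the reverse of the half-Singleton bound. The non-asymptotic half-Singleton bound (the remark following the theorem of the same name above) already gives $d_F(\cC)\le n-2k+2$ for every $[n,k]_q$ code, so it remains to show $d_F(\cC)\ge n-2k+2$. By Lemma~\ref{lem:ins-del-comb} together with the identity $\FLL(\bfc,\bfc')=n-|\LCS(\bfc,\bfc')|$, this is the same as proving that any two distinct codewords have a longest common subsequence of length at most $2k-2$. I would argue by contradiction: suppose $f,g\in\F_q[x]$ with $\deg f,\deg g<k$ and $f\neq g$ have $c_f=(f(\alpha_1),\dots,f(\alpha_n))$ and $c_g=(g(\alpha_1),\dots,g(\alpha_n))$ with $|\LCS(c_f,c_g)|\ge 2k-1$, and show that this forces one of the determinants in the hypothesis to be zero.

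First I would record the combinatorial data: truncating an overlong common subsequence if needed, a common subsequence of $c_f$ and $c_g$ of length exactly $2k-1$ is witnessed by two increasing vectors $I,J\in[n]^{2k-1}$ with $f(\alpha_{I_\ell})=g(\alpha_{J_\ell})$ for every $\ell\in[2k-1]$; being increasing, $I$ and $J$ each have distinct coordinates, so $V_{I,J}$ is defined for this pair. Next I would show $I$ and $J$ agree on at most $k-1$ coordinates: if $I_\ell=J_\ell$ held for $k$ or more indices $\ell$, then $f-g$, a polynomial of degree less than $k$, would vanish at the corresponding $k$ or more distinct points $\alpha_{I_\ell}$, forcing $f=g$, a contradiction. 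Hence $(I,J)$ meets the hypotheses of the proposition, so $\det(V_{I,J}(\bm{\alpha}))\neq0$.

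The key step is a linear-algebraic reformulation of the $2k-1$ equalities $f(\alpha_{I_\ell})=g(\alpha_{J_\ell})$. Writing $f=\sum_{m=0}^{k-1}a_mx^m$ and $g=\sum_{m=0}^{k-1}b_mx^m$ and setting $c_0:=a_0-b_0$, each such equality becomes the linear relation $c_0+\sum_{m=1}^{k-1}a_m\alpha_{I_\ell}^m-\sum_{m=1}^{k-1}b_m\alpha_{J_\ell}^m=0$, whose coefficient vector is exactly the $\ell$-th row of $V_{I,J}(\bm{\alpha})$. Thus the $(2k-1)$-vector $\bfv=(c_0,a_1,\dots,a_{k-1},-b_1,\dots,-b_{k-1})^{\mathsf T}$ lies in $\ker V_{I,J}(\bm{\alpha})$; and since $f\neq g$ the scalars $a_0-b_0,a_1,\dots,a_{k-1},b_1,\dots,b_{k-1}$ are not all zero (otherwise $f=g$ would be the same constant), so $\bfv\neq\zero$ and therefore $\det(V_{I,J}(\bm{\alpha}))=0$. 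This contradicts the conclusion of the previous paragraph, so $|\LCS(c_f,c_g)|\le 2k-2$ for all distinct codewords; hence $d_F(\cC)\ge n-2k+2$, and with the upper bound, $d_F(\cC)=n-2k+2$.

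I expect the only genuinely non-routine point to be spotting this reformulation: seeing that a common subsequence of length exactly $2k-1$ produces exactly $2k-1$ scalar equations, and that absorbing the two constant terms $a_0,b_0$ into the single unknown $c_0=a_0-b_0$ makes the system a square homogeneous one whose coefficient matrix is precisely $V_{I,J}(\bm{\alpha})$. The remaining steps --- reducing to length exactly $2k-1$, bounding the coordinate agreement of $I,J$ via $\deg(f-g)<k$, and checking the kernel vector is nonzero --- are all short. The one bookkeeping point to watch is that ``increasing'' means \emph{strictly} increasing, so the coordinates of $I$ (and of $J$) are automatically distinct, as required by the definition of $V_{I,J}$.
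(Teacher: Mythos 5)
Your proof is correct, and it is essentially the argument given in the cited source~\cite{con2023reed}: the upper bound $d_F(\cC)\le n-2k+2$ is generic for linear codes via the non-asymptotic half-Singleton bound, and the lower bound reduces to showing $|\LCS(c_f,c_g)|\le 2k-2$, which follows because a length-$(2k-1)$ common subsequence would produce, after absorbing $a_0,b_0$ into $c_0=a_0-b_0$, a nonzero vector in $\ker V_{I,J}(\bm\alpha)$ for a pair $(I,J)$ that, by the degree-$<k$ argument, agrees in at most $k-1$ coordinates. All the bookkeeping points you flag --- strict monotonicity giving distinct coordinates, truncating to length exactly $2k-1$, and the nonvanishing of the kernel vector --- are handled correctly.
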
 

For the ease of exposition, we say that an $[n,k]_q$ RS-code defined by an evaluation vector $\bm{\alpha}=(\alpha_1,\ldots,\alpha_n)$ satisfies the \emph{algebraic condition} if the evaluation vector $\bm{\alpha}$ satisfies the condition stated in Proposition \ref{prop:cond-for-RS}. 

\begin{definition}
Let $\cC$ be an $[n,2]_q$-RS code with evaluation vector $\bm{\alpha} = (\alpha_1, \ldots, \alpha_n) \in \mathbb{F}_q^n$ satisfying the algebraic condition.  
Define the map  
\[
\Gamma : \binom{\bm{\alpha}}{3} \to \mathbb{F}_q,\quad \Gamma(\alpha_i, \alpha_j, \alpha_k) \triangleq \frac{\alpha_i - \alpha_j}{\alpha_j - \alpha_k}
\]
for all $1 \le i < j < k \le n$.
\end{definition}

\begin{lemma}\label{lem:into_map}
Let $\cC$ be an $[n,2]_q$-RS code with evaluation vector $\bm{\alpha} = (\alpha_1, \ldots, \alpha_n) \in \mathbb{F}_q^n$ satisfying the algebraic condition. Then, the map $\Gamma : \binom{\bm{\alpha}}{3} \to \mathbb{F}_q$ is injective. 
\end{lemma}
\begin{proof}
Let $I, J \in [n]^3$ be any two increasing vectors. We next consider the following two cases based upon the number of shared coordinates.\\
\textbf{Case I:} $I, J \in [n]^3$ agree on at most one coordinate.\\
Since $\mathcal{C}$ satisfies the algebraic condition, from Proposition \ref{prop:cond-for-RS} the following holds:
\[
\left|
\begin{pmatrix}
1 & \alpha_{I_1} & \alpha_{J_1} \\ 
1 & \alpha_{I_2} & \alpha_{J_2} \\
1 & \alpha_{I_3} & \alpha_{J_3}
\end{pmatrix} 
\right|
\neq 0
\;,
\]
or equivalently, 
\[
\frac{\alpha_{I_1} - \alpha_{I_2}}{\alpha_{I_2} - \alpha_{I_3}} \neq \frac{\alpha_{J_1} - \alpha_{J_2}}{\alpha_{J_2} - \alpha_{J_3}} \;. 
\]
\textbf{Case II:} $I, J \in [n]^3$ agree on exactly two coordinates.\\
Since the reaming coordinates are distinct, it is easy to verify that
\[
\frac{\alpha_{I_1} - \alpha_{I_2}}{\alpha_{I_2} - \alpha_{I_3}} \neq \frac{\alpha_{J_1} - \alpha_{J_2}}{\alpha_{J_2} - \alpha_{J_3}} \;. 
\]
Therefore, the map $\Gamma$ is injective.
\end{proof}

\section{Our Decoding Algorithm}
\label{sec:dec_algo}
In \cite{con2024optimal}, the authors designed explicit $[n,2]_q$-RS Codes, over a field size of $q =\cO(n^3)$, which can correct up to $n-3$ deletions. We describe one of the constructions presented in \cite{con2024optimal} below. 

\begin{construction}\cite[Proposition 2.3]{con2024optimal} \label{const:RS2-odd}
Let $\mathbb{F}_{q}$ be a finite field of characteristic $p>2$ and let $\mathcal{A}\subseteq \Fq^*$ be a subset of size $n$. Let $\gamma$ be a root of a degree $3$ irreducible polynomial over $\Fq$, and let the vector 
$\bm{\alpha} = (\alpha_1, \alpha_2, \ldots, \alpha_n)$
be some ordering  of the $n$ elements $\delta + \delta^2 \cdot \gamma, \delta \in \mathcal{A}$. Then, the $\left[n,2\right]$ RS-code defined over $\mathbb{F}_{q^3}$ with the evaluation vector $\bm{\alpha}$ can correct any $n-3$ ins-del errors. Furthermore, the blocklength can be as large as $q - 1$.
\end{construction}


We first present a decoding algorithm that can recover from any $n-3$ deletions in $\cO(n^3)$ time for any $[n,2]_q$-RS code that satisfies the algebraic condition. Then specifically for Construction \ref{const:RS2-odd}, we use the algebraic structure of the evaluation points to design an improved decoding algorithm that can recover from any $n-3$ deletions in \emph{linear}-time.

\subsection{Cubic-Time Decoding Algorithm}
Let $\bfm \in \Fq^2$ and $\bfc \in \cC$ be the codeword corresponding to the message vector $\bfm$. Let $\bfc$ be the transmitted codeword. Recall that each codeword symbol $c_i$ is obtained by evaluating a degree-one polynomial $f(x)=m_1 + m_2\,x$ at a known point $\alpha_i$.  In other words, 
\[
c_i = f(\alpha_i) = m_1 + m_2\,\alpha_i,
\]
where $(m_1,m_2)$ is the original message pair (both unknown to the decoder).  If exactly $n-3$ symbols are deleted, then the decoder receives a vector of length three:
\[
\left(c_{\kappa_1},\, c_{\kappa_2},\, c_{\kappa_3}\right),
\]
for some unknown indices $\kappa_1 < \kappa_2 < \kappa_3$.  Although the decoder does not know $m_1$, $m_2$, or the indices themselves, it can compute the ratio
\[
\beta \;=\; \frac{\,c_{\kappa_1} - c_{\kappa_2}\,}{\,c_{\kappa_2} - c_{\kappa_3}\,}. 
\]
Since each $c_{\kappa_j} = m_1 + m_2\,\alpha_{\kappa_j}$, the subtraction cancels $m_1$ and leaves
\[
c_{\kappa_1} - c_{\kappa_2} = m_2\,\bigl(\alpha_{\kappa_1} - \alpha_{\kappa_2}\bigr), 
\quad
c_{\kappa_2} - c_{\kappa_3} = m_2\,\bigl(\alpha_{\kappa_2} - \alpha_{\kappa_3}\bigr).
\]
Thus the ratio $\beta$ simplifies to
\[
\beta \;=\; \frac{\alpha_{\kappa_1} - \alpha_{\kappa_2}}{\alpha_{\kappa_2} - \alpha_{\kappa_3}},
\]
independent of $m_1$ and $m_2$.  In other words, the received vector can be used to identify exactly which three evaluation points $(\alpha_{\kappa_1},\alpha_{\kappa_2},\alpha_{\kappa_3})$ must have been used, provided that no two distinct triples of $\bm{\alpha}$ give the same `ratio'. 

From Lemma~\ref{lem:into_map}, we know that any RS code that satisfies the algebraic condition guarantees precisely this injectivity property. Hence, a decoder can simply compute $\beta$ from the received vector, and then search through all $\binom{n}{3}$ triples of indices $(J_1,J_2,J_3)$ to find the unique triple for which 
\[
\frac{\alpha_{J_1} - \alpha_{J_2}}{\alpha_{J_2} - \alpha_{J_3}} \;=\;\beta.
\]
Once those three indices are identified, the decoder knows exactly which points $(\alpha_{\kappa_j},c_{\kappa_j})$ lie on the original degree-one polynomial $f(x)$.  Recovering the message $(m_1,m_2)$ then amounts to fitting a degree‐one polynomial through any two of these received points. Finally, knowing $(m_1,m_2)$ allows reconstruction of all $n$ symbols by evaluating $f(\alpha_i)$ for $i=1,\dots,n$.

We now formally present the decoding algorithm and then verify it's correctness and time-complexity in the subsequent theorem. 

\begin{algorithm}[H]
\caption{Cubic-Time Decoding Algorithm}
\label{alg:dec_algA_formal}
\SetAlgoLined
\DontPrintSemicolon

\SetKwInOut{Input}{Input}
\SetKwInOut{Output}{Output}

\Input{Received vector $(c_{\kappa_1}, c_{\kappa_2}, c_{\kappa_3})$ (in order of appearance)\\
Evaluation vector $\bm{\alpha}=(\alpha_1,\dots,\alpha_n)\in\Fq^n$ satisfying the algebraic condition}
\Output{Recovered codeword $\bfc=(c_1,\dots,c_n)\in\cC$}

\medskip
Compute 
\[
\beta \;=\; \frac{\,c_{\kappa_1} - c_{\kappa_2}\,}{\,c_{\kappa_2} - c_{\kappa_3}\,} \;\in\;\Fq.
\]

\For{$J = (J_1,J_2,J_3)\in [n]^3$ with $1 \le J_1 < J_2 < J_3 \le n$}{
    Compute 
    \[
    \eta \;=\; \frac{\alpha_{J_1} - \alpha_{J_2}}{\alpha_{J_2} - \alpha_{J_3}} \;\in\;\Fq.
    \]
    \If{$\eta = \beta$}{
        Set $(\kappa_1,\kappa_2,\kappa_3) \;=\; (J_1,J_2,J_3)$\;
        \textbf{Break}\;
    }
}

Interpolate the unique degree‐$\le1$ polynomial $f(x) = m_1 + m_2\,x$ satisfying
\[
f(\alpha_{\kappa_j}) = c_{\kappa_j}, \quad j=1,2,3
\]
(using any two points).

\For{$i = 1,\dots,n$}{
    Set 
    \[
    c_i = f(\alpha_i).
    \]
}
\Return $\bfc = (c_1,\dots,c_n)$.
\end{algorithm}





\begin{theorem} Let $\cC$ be an $[n,2]_q$-RS code with evaluation vector $\bm{\alpha} = (\alpha_1, \ldots, \alpha_n)$ satisfying the algebraic condition. Then, Algorithm~\ref{alg:dec_algA_formal} can recover the transmitted codeword in $\cO(n^3)$ time.  
\end{theorem}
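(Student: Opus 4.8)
The plan is to verify that Algorithm~\ref{alg:dec_algA} is correct and then bound its running time. For correctness, suppose the transmitted codeword is $\bfc = c_f = (f(\alpha_1),\ldots,f(\alpha_n))$ for some $f = a + bx$ with $b \neq 0$ (if $b = 0$ the codeword is constant and the argument degenerates; I would either handle this trivial case separately or note that the received symbols being equal signals it). After $n-3$ deletions the decoder sees $(c_{k_1},c_{k_2},c_{k_3})$ where $1 \le k_1 < k_2 < k_3 \le n$ are the true surviving indices, so $c_{k_i} = f(\alpha_{k_i}) = a + b\alpha_{k_i}$. The first step is to observe that
\[
\beta = \frac{c_{k_1}-c_{k_2}}{c_{k_2}-c_{k_3}} = \frac{b(\alpha_{k_1}-\alpha_{k_2})}{b(\alpha_{k_2}-\alpha_{k_3})} = \frac{\alpha_{k_1}-\alpha_{k_2}}{\alpha_{k_2}-\alpha_{k_3}} = g(\alpha_{k_1},\alpha_{k_2},\alpha_{k_3}),
\]
so the ratio $\beta$ recovered from the received word equals the value of the map $g$ on the true triple of evaluation points. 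The denominators are nonzero since the $\alpha_i$ are distinct, and the cancellation of $b$ is exactly why working with this cross-ratio-like quantity eliminates the unknown leading coefficient.

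Next I would invoke Lemma~\ref{lem:into_map}: since $\cC$ satisfies the algebraic condition, $g$ is one-to-one on $\binom{\bm\alpha}{3}$. Hence there is a \emph{unique} increasing $J \in [n]^3$ with $\frac{\alpha_{J_1}-\alpha_{J_2}}{\alpha_{J_2}-\alpha_{J_3}} = \beta$, and that $J$ must be the true surviving index vector $(k_1,k_2,k_3)$. Therefore the loop over $J \in [n]^3$ in Algorithm~\ref{alg:dec_algA} finds exactly this $J$, the assignment $\bfK_i = J_i$ recovers the correct positions, and interpolating the unique affine polynomial through the two (or three) points $(\alpha_{J_i}, c_{J_i})$ recovers $f$, hence $\bfc$. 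I should also remark that because $n-3$ is the maximum number of correctable deletions (Construction~\ref{const:RS2-odd} and the half-Singleton bound), fewer deletions only leave the decoder with more surviving symbols, from which any three consecutive ones suffice by the same argument — or one simply notes the algorithm is stated for the worst case.

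For the running time: computing $\beta$ is $\cO(1)$ field operations. The \texttt{for} loop ranges over increasing triples $J$, of which there are $\binom{n}{3} = \cO(n^3)$ (and at most $n^3$ if one iterates over all of $[n]^3$), each iteration costing $\cO(1)$ to compute $\eta$ and compare. The final interpolation of a degree-$<2$ polynomial through a constant number of points is $\cO(1)$. Summing gives $\cO(n^3)$. The only subtlety worth flagging — the main (minor) obstacle — is bookkeeping around the degenerate constant-polynomial case and confirming that the "agree on at most $k-1 = 1$ coordinates" hypothesis of Proposition~\ref{prop:cond-for-RS} is automatically met here: two distinct increasing triples $I \ne J$ cannot share all three coordinates, so they agree on at most two, but Lemma~\ref{lem:into_map} already packages the relevant injectivity statement (it is proved only for triples agreeing on at most one coordinate, yet distinct increasing triples that agree on two coordinates force distinct ratios by a direct check, so $g$ is genuinely injective on all of $\binom{\bm\alpha}{3}$). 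I would therefore lean entirely on Lemma~\ref{lem:into_map} and keep the proof to the three steps above.
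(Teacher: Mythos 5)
Your argument follows the paper's proof almost exactly: compute $\beta$ from the received triple, observe that the unknown slope cancels so $\beta = g(\alpha_{\bfK_1},\alpha_{\bfK_2},\alpha_{\bfK_3})$, invoke the injectivity of $g$ from Lemma~\ref{lem:into_map} to conclude the loop finds the unique correct index triple, and bound the search by $\cO(n^3)$ before an $\cO(1)$-point interpolation (the paper writes $\cO(n)$ here because it outputs the full length-$n$ codeword, but this is immaterial to the bound). The two subtleties you flag are real and are silently skipped by the paper: the constant-codeword case $m_2=0$ makes $\beta$ a $0/0$ and must be handled separately (trivially, since all received symbols are then equal), and Lemma~\ref{lem:into_map}'s proof only invokes Proposition~\ref{prop:cond-for-RS} for triples agreeing on at most $k-1=1$ coordinate, so the case of two increasing triples sharing exactly two coordinates needs the elementary direct check you indicate — your handling of both is correct.
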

\begin{proof}
Let $\bfm \in \Fq^2$ and $\bfc \in \cC$ be the codeword corresponding to the message vector $\bfm$. Therefore, we have that
\begin{align*}
c_i = m_1 + \alpha_i m_2,
\end{align*}
where $i\in[n]$. Furthermore, it follows that 
\begin{align*}
\Gamma(\alpha_{\kappa_1}, \alpha_{\kappa_2}, \alpha_{\kappa_3}) &= \frac{\alpha_{\kappa_1} - \alpha_{\kappa_2}}{\alpha_{\kappa_2} - \alpha_{\kappa_3}}  \\
&= \frac{\left((c_{\kappa_1}-m_1)m_2^{-1}\right) - \left((c_{\kappa_2}-m_1)m_2^{-1}\right)}{\left((c_{\kappa_2}-m_1)m_2^{-1}\right) -\left((c_{\kappa_3}-m_1)m_2^{-1}\right)}\\ 
&= \frac{c_{\kappa_1} - c_{\kappa_2}}{c_{\kappa_2} - c_{\kappa_3}} \triangleq \beta. 
\end{align*}
Therefore, we get that 
\begin{align}\label{eq:triplet_cond}
\frac{\alpha_{\kappa_1} - \alpha_{\kappa_2}}{\alpha_{\kappa_2} - \alpha_{\kappa_3}}  = \beta\;.
\end{align}
As shown in Lemma~\ref{lem:into_map}, the map $\Gamma$ is injective. Therefore, there is a unique solution to Equation~\eqref{eq:triplet_cond}. Since there are $\binom{n}{3}=\cO(n^3)$ possible triples of indices, testing each evaluation point triple's value under the map $\Gamma$ against  $\beta$ takes $\cO(n^3)$ time overall.  After locating the correct triple, determining the degree-one polynomial $f(x)$ and reconstructing the entire length‐$n$ codeword is only linear time in $n$.  Thus the decoding cost is dominated by the cubic‐time search over all triples.
\end{proof}




\subsection{Linear-time Decoding Algorithm}

Recall that in Construction \ref{const:RS2-odd}, each evaluation point is of the form 
\[
\alpha_i \;=\; \delta_i \;+\; \delta_i^2\,\gamma,
\]
where $\delta_i\in\Fq^*$ and $\gamma\in\Fq$ is a root of a minimal polynomial of degree $3$ over $\Fq$. As before, we form the ratio 
\[
\beta 
\;=\; 
\frac{\,c_{\kappa_1} - c_{\kappa_2}\,}{\,c_{\kappa_2} - c_{\kappa_3}\,}
\;=\;
\frac{\alpha_{\kappa_1} - \alpha_{\kappa_2}}{\alpha_{\kappa_2} - \alpha_{\kappa_3}}
\]
because each $c_{\kappa_j}=m_1+m_2\,\alpha_{\kappa_j}$.  Writing 
\[
\beta \;=\; a\,\gamma^2 + b\,\gamma + c,
\quad
\beta\,\gamma \;=\; r\,\gamma^2 + s\,\gamma + t,
\quad a,b,c,r,s,t\in\Fq,
\]
and substituting $\alpha_{\kappa_j}=\delta_{\kappa_j}+\delta_{\kappa_j}^2\gamma$ into 
\[
\frac{\alpha_{\kappa_1} - \alpha_{\kappa_2}}{\alpha_{\kappa_2} - \alpha_{\kappa_3}} \;=\;\beta
\]
yields a degree‐$<3$ polynomial in $\gamma$.  Equating each coefficient of $1,\gamma,\gamma^2$ to zero gives a system of quadratic equations in $\delta_{\kappa_1},\delta_{\kappa_2},\delta_{\kappa_3}\in\Fq$.  Solving these yields explicit formulas:
\[
\theta \;=\;\frac{a}{r}, 
\quad
\delta_{\kappa_2} 
\;=\; 
\frac{\,b - \theta\bigl(c^2 + s - 2c\,t\,\theta + t^2\,\theta^2\bigr)\,}
     {\,2\bigl(c + c^2 - 2c\,t\,\theta + t\,\theta(-1 + t\,\theta)\bigr)\,},
\]
\[
\delta_{\kappa_3} = -\,\delta_{\kappa_2} - \theta,
\quad
\delta_{\kappa_1} 
= 
\delta_{\kappa_2}\bigl(1 + 2c - 2\,t\,\theta\bigr) + \theta\bigl(c - t\,\theta\bigr).
\]
Since each $\delta_{\kappa_j}\in\Fq^*$, a prebuilt hash‐table of $\delta_i\mapsto i$ identifies $\kappa_j$ in $\cO(1)$ time.  Thus, instead of testing $\binom n3$ triples, we solve for $(\delta_{\kappa_1},\delta_{\kappa_2},\delta_{\kappa_3})$ in $\cO(1)$ field operations and then interpolate the unique degree-one polynomial through the three points in $\cO(n)$ time.

We now formally present the decoding algorithm and then verify it's correctness and time-complexity in the subsequent theorem.

\begin{algorithm}[H]
\caption{Linear-Time Decoding Algorithm}
\label{alg:dec_alg_linear}
\SetAlgoLined
\DontPrintSemicolon
\SetKwInOut{Input}{Input}
\SetKwInOut{Output}{Output}

\Input{Received Vector $(c_{\kappa_1},\,c_{\kappa_2},\,c_{\kappa_3})$ and $\gamma \in \Fq$\\
Hash‐table mapping each $\delta_i\in\Fq^*$ to index $i$, and points $\alpha_i=\delta_i+\delta_i^2\gamma$.}
\Output{Recovered codeword $\bfc=(c_1,\dots,c_n)$}

Compute 
\[
\beta = \frac{\,c_{\kappa_1} - c_{\kappa_2}\,}{\,c_{\kappa_2} - c_{\kappa_3}\,}\;.
\]
Express
\[
\beta = a\,\gamma^2 + b\,\gamma + c,\quad
\beta\,\gamma = r\,\gamma^2 + s\,\gamma + t,
\]
with $a,b,c,r,s,t\in\Fq$.\;

Set $\theta = a/r$.\;

Compute 
\[
\delta_{\kappa_2} 
= 
\frac{\,b - \theta\bigl(c^2 + s - 2c\,t\,\theta + t^2\,\theta^2\bigr)\,}
     {\,2\bigl(c + c^2 - 2c\,t\,\theta + t\,\theta(-1 + t\,\theta)\bigr)\,},\quad
\delta_{\kappa_3} = -\,\delta_{\kappa_2} - \theta,\quad
\delta_{\kappa_1} = \delta_{\kappa_2}\bigl(1 + 2c - 2\,t\,\theta\bigr) + \theta\bigl(c - t\,\theta\bigr).
\]

Lookup each $\kappa_j$ from $\delta_{\kappa_j}$ in the hash‐table (each in $\cO(1)$).\;

Interpolate the unique degree one-polynomial $f(x)=m_1 + m_2\,x$ through 
\[
(\alpha_{\kappa_j},\,c_{\kappa_j}),\quad j=1,2,3,
\]
and set $c_i = f(\alpha_i)$ for $i=1,\dots,n$.\;

\Return $\bfc$.
\end{algorithm}

\setcounter{theorem}{0}
\refstepcounter{theorem}
\begin{theorem}[Formal]
Let $\cC$ be an $[n,2]_{q^3}$-RS code defined over $\mathbb{F}_{q^3}$ with evaluation vector $\bm{\alpha} = (\alpha_1, \ldots, \alpha_n)$ as specified in Construction \ref{const:RS2-odd}. Then, Algorithm~\ref{alg:dec_alg_linear} can recover the transmitted codeword from any $n-3$ received symbols in $\cO(n)$ time. 
\end{theorem}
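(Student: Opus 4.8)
The plan is to follow the proof for Algorithm~\ref{alg:dec_algA} to recover the ratio $\beta$, and then to exploit the special quadratic shape $\alpha_i=\delta_i+\delta_i^2\gamma$ of the evaluation points, working in the $\Fq$-basis $\{1,\gamma,\gamma^2\}$ of $\mathbb{F}_{q^3}$, in order to reduce the recovery of the three surviving evaluation points to a small linear system that is solved in closed form rather than by exhaustive search.

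First I would repeat the opening computation from the proof for Algorithm~\ref{alg:dec_algA} (which is valid for every $[n,2]$ RS code): writing $c_i=m_1+\alpha_i m_2$ shows that the element $\beta$ computed in Line~1 equals $\dfrac{\alpha_{\bfK_1}-\alpha_{\bfK_2}}{\alpha_{\bfK_2}-\alpha_{\bfK_3}}$, where $\bfK_1<\bfK_2<\bfK_3$ are the (unknown) positions of the three symbols that survived. Substituting $\alpha_{\bfK_i}=\delta_{\bfK_i}+\delta_{\bfK_i}^2\gamma$ and factoring $\delta_{\bfK_i}^2-\delta_{\bfK_j}^2=(\delta_{\bfK_i}-\delta_{\bfK_j})(\delta_{\bfK_i}+\delta_{\bfK_j})$ turns this into
\[
\beta\,(\delta_{\bfK_2}-\delta_{\bfK_3})\bigl(1+(\delta_{\bfK_2}+\delta_{\bfK_3})\gamma\bigr) = (\delta_{\bfK_1}-\delta_{\bfK_2})\bigl(1+(\delta_{\bfK_1}+\delta_{\bfK_2})\gamma\bigr).
\]

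Next I would pass to the $\Fq$-basis $\{1,\gamma,\gamma^2\}$ of $\mathbb{F}_{q^3}$: write $\beta=a\gamma^2+b\gamma+c$ and reduce $\beta\gamma$ modulo the minimal polynomial of $\gamma$ to obtain $\beta\gamma=r\gamma^2+s\gamma+t$ (an $\cO(1)$ computation, with $r,s,t$ explicit $\Fq$-linear functions of $a,b,c$). Expanding the left-hand side of the displayed identity as $(\delta_{\bfK_2}-\delta_{\bfK_3})\beta+(\delta_{\bfK_2}-\delta_{\bfK_3})(\delta_{\bfK_2}+\delta_{\bfK_3})\,\beta\gamma$ and comparing coefficients of $1,\gamma,\gamma^2$ yields three equations over $\Fq$ in $\delta_{\bfK_1},\delta_{\bfK_2},\delta_{\bfK_3}$. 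The coefficient of $\gamma^2$ factors as $(\delta_{\bfK_2}-\delta_{\bfK_3})\bigl(a+r(\delta_{\bfK_2}+\delta_{\bfK_3})\bigr)=0$, so $\delta_{\bfK_2}+\delta_{\bfK_3}=-a/r=-\theta$; the constant coefficient factors as $(\delta_{\bfK_2}-\delta_{\bfK_3})\bigl(c+t(\delta_{\bfK_2}+\delta_{\bfK_3})\bigr)=\delta_{\bfK_1}-\delta_{\bfK_2}$, so $\tfrac{\delta_{\bfK_1}-\delta_{\bfK_2}}{\delta_{\bfK_2}-\delta_{\bfK_3}}=c-t\theta$. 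Feeding these two identities into the coefficient of $\gamma$ makes that equation linear in the $\delta_{\bfK_i}$ too, so the three relations form a $3\times 3$ linear system over $\Fq$; eliminating (use $\delta_{\bfK_2}-\delta_{\bfK_3}=2\delta_{\bfK_2}+\theta$ from the first identity, substitute into the other two, and solve for $\delta_{\bfK_2}$) reproduces exactly the closed-form expressions for $\delta_{\bfK_2}$, $\delta_{\bfK_1}$, $\delta_{\bfK_3}$ computed in Lines~5--7 of Algorithm~\ref{alg:dec_algB}. Since the true evaluation points satisfy this system and the elimination is forced at every step, the algorithm outputs the true $\delta_{\bfK_i}$, hence the true $\alpha_{\bfK_i}=\delta_{\bfK_i}+\delta_{\bfK_i}^2\gamma$.

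The hard part --- really the only subtlety --- will be verifying that every quantity used as a denominator is nonzero, which is precisely where distinctness of the evaluation points is needed. One has to argue that $r\neq 0$ (otherwise the $\gamma^2$-equation forces $a=0$, hence $b=0$, hence $\beta\in\Fq$, and then the displayed identity forces $\delta_{\bfK_1}+\delta_{\bfK_2}=\delta_{\bfK_2}+\delta_{\bfK_3}$, contradicting distinctness), and that the two factors $c-t\theta=\tfrac{\delta_{\bfK_1}-\delta_{\bfK_2}}{\delta_{\bfK_2}-\delta_{\bfK_3}}$ and $1+c-t\theta=\tfrac{\delta_{\bfK_1}-\delta_{\bfK_3}}{\delta_{\bfK_2}-\delta_{\bfK_3}}$ appearing in the denominator of Line~5 are nonzero, again because $\delta_{\bfK_1},\delta_{\bfK_2},\delta_{\bfK_3}$ are distinct (and $2\neq 0$ since $p>2$). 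To finish, I would note that once the three field elements $\alpha_{\bfK_i}$ are known, the double loop of Lines~9--13 identifies the indices $J_i=\bfK_i$ by a single linear scan of $\bm{\alpha}$ in $\cO(n)$ time, interpolating the polynomial of degree $<2$ through $(\alpha_{J_1},c_{J_1})$ and $(\alpha_{J_2},c_{J_2})$ recovers $\bfm$ in $\cO(1)$ time, and evaluating to write out $\bfc$ costs $\cO(n)$. Since all arithmetic takes place in the fixed-degree extension $\mathbb{F}_{q^3}$, the total running time is $\cO(n)$ field operations, as claimed.
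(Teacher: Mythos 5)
Your proof is correct and follows the same overall plan as the paper's: recover $\beta$ from the received triple, substitute $\alpha_{\bfK_i}=\delta_{\bfK_i}+\delta_{\bfK_i}^2\gamma$, expand both sides in the $\Fq$-basis $\{1,\gamma,\gamma^2\}$, and solve the resulting system over $\Fq$. The difference is in how that system is organized. The paper writes the three coefficient equations $p_0=p_1=p_2=0$ directly (quadratic in the $\delta_{\bfK_i}$), solves them, and obtains two candidates for $\delta_{\bfK_2}$, discarding the spurious one $-a/(2r)$ because it forces $\delta_{\bfK_2}=\delta_{\bfK_3}$. You instead factor $\delta_{\bfK_i}^2-\delta_{\bfK_j}^2=(\delta_{\bfK_i}-\delta_{\bfK_j})(\delta_{\bfK_i}+\delta_{\bfK_j})$ at the outset, which exposes the common factor $\delta_{\bfK_2}-\delta_{\bfK_3}$ in every coefficient equation; cancelling it (legitimate since the $\delta$'s are distinct) collapses the system to the three linear relations $\delta_{\bfK_2}+\delta_{\bfK_3}=-\theta$, $\delta_{\bfK_1}-\delta_{\bfK_2}=(c-t\theta)(\delta_{\bfK_2}-\delta_{\bfK_3})$, and $\delta_{\bfK_1}+\delta_{\bfK_2}=(b-s\theta)/(c-t\theta)$, so the spurious root never enters. (Your phrase ``makes that equation linear'' is slightly loose --- the $\gamma$-coefficient equation is quadratic on its own, and only becomes linear after substituting the constant-coefficient relation and cancelling $\delta_{\bfK_2}-\delta_{\bfK_3}$ --- but the algebra you sketch is exactly right.) You also make explicit two non-vanishing verifications that the paper leaves implicit: that $r\neq0$ (so $\theta=a/r$ is well-defined), and that the denominator in Line~5 factors as $2(c-t\theta)(1+c-t\theta)$ with both factors nonzero by distinctness of the $\delta_{\bfK_i}$ and $p>2$. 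Both routes arrive at the same closed forms in Lines~5--7 and the same $\cO(n)$ bound; your factoring step gives a tidier derivation, and your non-vanishing analysis supplies detail the paper's proof glosses over.
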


\begin{proof}
Let $\bfm \in \Fq^2$ and $\bfc \in \cC$ be the codeword corresponding to the message vector $\bfm$. Therefore, we have that
\begin{align*}
c_i = m_1 + \alpha_i m_2,
\end{align*}
where $i\in[n]$. Furthermore, it follows that 
\begin{align*}
\Gamma(\alpha_{\kappa_1}, \alpha_{\kappa_2}, \alpha_{\kappa_3}) &= \frac{\alpha_{\kappa_1} - \alpha_{\kappa_2}}{\alpha_{\kappa_2} - \alpha_{\kappa_3}}  \\
&= \frac{\left((c_{\kappa_1}-m_1)m_2^{-1}\right) - \left((c_{\kappa_2}-m_1)m_2^{-1}\right)}{\left((c_{\kappa_2}-m_1)m_2^{-1}\right) -\left((c_{\kappa_3}-m_1)m_2^{-1}\right)}\\ 
&= \frac{c_{\kappa_1} - c_{\kappa_2}}{c_{\kappa_2} - c_{\kappa_3}} \triangleq \beta\;.
\end{align*}
Therefore, we get that 
\begin{align}\label{eq:triplet_cond_B}
\frac{\alpha_{\kappa_1} - \alpha_{\kappa_2}}{\alpha_{\kappa_2} - \alpha_{\kappa_3}}  = \beta\;.
\end{align}
As shown in Lemma~\ref{lem:into_map}, the map is injective. Therefore, there is a unique solution to Equation~\eqref{eq:triplet_cond_B}.

Next, from Construction \ref{const:RS2-odd}, for $i\in [n]$, we have $\alpha_i = \delta_i + \delta_i^2 \gamma$, where recall that $\delta_i \in \Fq^{*}$. Upon making this substitution in \eqref{eq:triplet_cond_B}, we get 
\begin{small}
\begin{align*}
\delta_{\kappa_1} - \delta_{\kappa_2} + (\delta_{\kappa_1}^2- \delta_{\kappa_2}^2) \gamma 
+ \beta \big( \delta_{\kappa_3} - \delta_{\kappa_2} + (\delta_{\kappa_3}^2- \delta_{\kappa_2}^2) \gamma \big)  = 0 \;.
\end{align*}
\end{small}

Write $\beta = a\gamma^2 + b\gamma + c$ and $\beta\gamma = r\gamma^2 + s\gamma + t$, where $a,b,c,r,s,t \in \F_q$. Observe that the LHS is a polynomial in $\gamma$ of degree less than $3$ over $\Fq$.
Namely, 
$
p_0(\bm{\delta}) + p_1(\bm{ \delta})\cdot \gamma +p_2(\bm{ \delta})\cdot \gamma^2 = 0
$,
where,
\begin{align*}
p_0(\bm{\delta}) &= \delta_{\kappa_1} - \delta_{\kappa_2} + c(\delta_{\kappa_3} - \delta_{\kappa_2} ) + t(\delta_{\kappa_3}^2 - \delta_{\kappa_2}^2)\;,\\
p_1(\bm{\delta}) &= \delta_{\kappa_1}^2 - \delta_{\kappa_2}^2 + b(\delta_{\kappa_3} - \delta_{\kappa_2} ) + s(\delta_{\kappa_3}^2 - \delta_{\kappa_2}^2)\;,\\
p_2(\bm{\delta}) &= a(\delta_{\kappa_3} - \delta_{\kappa_2} ) + r(\delta_{\kappa_3}^2 - \delta_{\kappa_2}^2)\;.
\end{align*}
Next, by the definition of $\gamma$,  $p_i(\bm{\delta}) = 0$ for  $i=0,1,2$. We first solve the system of equations for the case of $a = 0$ to serve as a sanity check for the general case.  For $a=0$, we get 
\begin{align*}
\delta_{\kappa_2} &= -\delta_{\kappa_3}, \\
\delta_{\kappa_1} &= - (2c+1)\delta_{\kappa_3}, \\ \delta_{\kappa_3} &= \frac{-b}{2c^2 + 2c}~\text{or}~0. 
\end{align*}
Since $\delta_{\kappa_1}, \delta_{\kappa_2}, \delta_{\kappa_3} \in \Fq^{*}$, we must have that 
$$
\delta_{\kappa_3} = \frac{-b}{2c^2 + 2c}.
$$
Let $\theta = \frac{a}{r}$. For the general case, we get  
\begin{align*}
\delta_{\kappa_3} &= - \delta_{\kappa_2} - \theta,\\ 
\delta_{\kappa_1} &= \delta_{\kappa_2}\left(1 + 2c-2t\theta\right) + \theta\left(c-t\theta\right),\\
\delta_{\kappa_2} &= \dfrac{(b - \theta \left(c^2 + s - 2 c t \theta + t^2 (\theta)^2\right))}{(2 (c + c^2 - 2 c t \theta + t \theta (-1 + t \theta)))}~\text{or}~\frac{-a}{2r}.
\end{align*}
Since $\delta_{\kappa_1} \neq \delta_{\kappa_2} \neq \delta_{\kappa_3}$, we must have that
$$
\delta_{\kappa_2} = \dfrac{(b - \theta \left(c^2 + s - 2 c t \theta + t^2 (\theta)^2\right))}{(2 (c + c^2 - 2 c t \theta + t \theta (-1 + t \theta)))}. 
$$
Hence, the decoding algorithm can find the evaluation points in $\cO(1)$ time. Once the evaluation points are known, the interpolation step to find the transmitted codeword can be performed in $\cO(n)$ time as the dimension of the code is $\cO(1)$.  
\end{proof}

Hence, the RS code construction by Roni et al. \cite{con2024optimal} is not only optimal in terms of field size, but it can also be decoded with optimal time complexity.

\section{Conclusion}\label{sec:conclusion}
In this work, we focused on the case of $2$‐dimensional RS code and by leveraging the algebraic structure of the evaluation points in the code construction of Roni et al.~\cite{con2024optimal}, we designed  a \emph{linear‐time} decoding algorithm that corrects up to $n-3$ deletions, maximum possible due to the half‐Singleton bound for ins‐del errors.  

Looking ahead, a natural direction is to explore whether similar techniques can be extended to higher dimensions ($k > 2$). Two key challenges arise in this setting:
\begin{enumerate}
    \item \textbf{Field‐Size Optimality.} For general $k$, the half‐Singleton bound implies that any $[n,k]_q$-RS code can correct at most $n - 2k + 1$ ins-del errors. However, the minimum field size $q$ required to achieve this bound remains unknown.
    \item \textbf{Generalized Injective Map.} Our decoding strategy relies critically on an injective map defined over triples of evaluation points. Extending this approach to $k > 2$ would require a map from $(2k-1)$ received symbols to $\F_q$ that remains injective over the relevant domain. The existence and efficient construction of such a map are currently open questions.
\end{enumerate}

\section*{Acknowledgments}
The author would like to acknowledge Roni Con and Lalitha Vadlamani for fruitful discussions on the problem. 

\newpage
\bibliographystyle{plain} 
\bibliography{ref}

\end{document}